\relax
\documentclass[letterpaper]{article} 
\usepackage{aaai22}  
\usepackage{times}  
\usepackage{helvet}  
\usepackage{courier}  
\usepackage[hyphens]{url}  
\usepackage{graphicx} 
\urlstyle{rm} 
\usepackage{natbib}  
\usepackage{caption} 
\DeclareCaptionStyle{ruled}{labelfont=normalfont,labelsep=colon,strut=off} 
\frenchspacing  
\setlength{\pdfpagewidth}{8.5in}  
\setlength{\pdfpageheight}{11in}  
%
\usepackage{algorithm}
\usepackage{algpseudocode}
\usepackage{syntax}
\usepackage{amsmath}
\usepackage{stmaryrd}
\usepackage{amsthm}
\usepackage{mathpartir}
\newtheorem{theorem}{Theorem}
\newtheorem{lemma}[theorem]{Lemma}
\newtheorem{proposition}[theorem]{Proposition}
\theoremstyle{definition}
\newtheorem{definition}[theorem]{Definition}
\usepackage{soul}
%
\usepackage{newfloat}
\usepackage{listings}
\lstset{%
  basicstyle={\footnotesize\ttfamily},
  numbers=left,numberstyle=\footnotesize,xleftmargin=2em,
  aboveskip=0pt,belowskip=0pt,%
  showstringspaces=false,tabsize=2,breaklines=true}
\floatstyle{ruled}
\newfloat{listing}{tb}{lst}{}
\floatname{listing}{Listing}
%
%
\pdfinfo{
/Title (A Variant of Concurrent Constraint Programming on GPU)
/Author (Pierre Talbot, Frederic Pinel, Pascal Bouvry)
/TemplateVersion (2022.1)
}

\usepackage{color} 

\setcounter{secnumdepth}{0} 

%


\title{A Variant of Concurrent Constraint Programming on GPU}
\author{
    Pierre Talbot, Fr\'{e}d\'{e}ric Pinel, Pascal Bouvry
}
\affiliations{
    Interdisciplinary Centre for Security, Reliability and Trust (SnT) \\
    University of Luxembourg\\
    Esch-sur-Alzette L-4243, Luxembourg


    pierre.talbot@uni.lu, frederic.pinel@uni.lu, pascal.bouvry@uni.lu
}

\usepackage{amssymb}
\newcommand\eqdef{\,\triangleq\,}

\begin{document}

\maketitle

\begin{abstract}
The number of cores on graphical computing units (GPUs) is reaching thousands nowadays, whereas the clock speed of processors stagnates.
Unfortunately, constraint programming solvers do not take advantage yet of GPU parallelism.
One reason is that constraint solvers were primarily designed within the mental frame of sequential computation.
To solve this issue, we take a step back and contribute to a simple, intrinsically parallel, lock-free and formally correct programming language based on \textit{concurrent constraint programming}.
We then re-examine parallel constraint solving on GPUs within this formalism, and develop \textsc{Turbo}, a simple constraint solver entirely programmed on GPUs.
\textsc{Turbo} validates the correctness of our approach and compares positively to a parallel CPU-based solver.
\end{abstract}

\section{Introduction}

The number of cores of graphical computing units (GPUs) steadily increases over the years, reaching thousands nowadays, whereas central processing unit (CPU) clock speed and core numbers stagnate~\cite{decline-cpus-2021}.
However, it is notoriously difficult and error prone to write parallel programs on shared memory systems~\cite{lee-threads}.
Threads introduce nondeterminism that must be pruned using synchronization mechanisms such as locks and barriers.
However, too much synchronization quickly reduce the efficiency of a parallel program, sometimes making it slower than its sequential counterpart.

This phenomenon is observed in the field of \textit{constraint programming}, where constraint solvers do not benefit yet from the parallelism offered by GPUs.
One of the reasons is that most of the key optimisations of constraint solvers, including propagation loop, search strategies, global constraints and various forms of constraints learning, were primarily designed in the mental frame of sequential computation.

To grasp the challenge, we consider a standard solving algorithm named \textit{propagate and search} (see, \emph{e.g.},~\citet{propagation-guido-tack}).
Each constraint is implemented by a function, called a \textit{propagator}, which removes inconsistent values from the domain of its variables.
The propagation is the process of executing the propagators until a fixed point is reached and no more values can be removed.
The \textit{search step} then splits the problem into two complementary subproblems.
Propagation and search alternates recursively until either a solution is found or no solution could be found.

Search can be parallelized without much hurdles by executing each subproblem on a different thread.
This approach has been explored in-depth on CPUs and in distributed settings~\cite{perron-search-1999,schulte-simple-pcp,malapert-embarrassingly-2016}.
However, it is not very efficient to run a subproblem per thread on a GPU; in part due to the limited cache memory available per thread.
On the other hand, propagation is almost never parallelized in modern constraint solvers because propagators are executed in a highly sequential manner~\cite{schulte-efficient-2008}.
Moreover, as propagators share variables, it seems that synchronization will unavoidably hinder efficiency and complicate the solver architecture~\cite{gent-review-2018}.

In this paper, we contribute to a simple, intrinsically parallel, lock-free and formally correct programming language based on \textit{concurrent constraint programming} (CCP)~\cite{saraswat-phd}, that we call parallel CCP (PCCP).
CCP has been very successful to describe concurrent computations at the theoretical level, but has had limited impact on practical implementations.
PCCP aims to bridge this gap between concurrency theory and parallel programming.

The key characteristics of PCCP is to allow threads to progress and communicate on a shared memory without synchronization, and still to always reach a unique deterministic result.
The main ingredient is the notion of fixed point over \textit{lattices}~\cite{davey-introduction-2002}, a practical and mathematical structure, which guarantees the correct combination of parallel processes.
To show the correctness of our language, we successively refine the semantics of PCCP from denotational, operational and then to a low-level semantics based on load and store operations.
In particular, we show that whether the fixed point of a PCCP program is computed sequentially or in parallel, the result of the execution remains the same.

Through the lens of PCCP, we then re-examine parallel constraint solving on GPUs.
We propose \textsc{Turbo}, the first propagate and search constraint solver fully implemented on GPUs.
To accommodate the architecture of GPUs, we have made several algorithmic simplifications which are normally considered crucial for efficiency in sequential solvers.
For instance, our propagation loop is eventless and is reminiscent of the AC-1 algorithm~\cite{mackworth-consistency-1977}.
Our experiments show that \textsc{Turbo} obtains slightly better performance when compared to \textsc{GeCode}~\cite{gecode}, a parallel CPU-based solver.
\textsc{Turbo} is only a proof of concept and several key components of modern constraint solvers are missing, in particular global constraints and constraint learning.
Nevertheless, constraint decomposition of global constraints is an efficient approach that can be immediately used in \textsc{Turbo}~\cite{narodytska-reformulation-2011,bessiere-range-roots-2009,schutt-why-2009}.
Still, it is clear that many improvements will be required to make \textsc{Turbo} a competitive solver.

\section{Parallel Concurrent Constraint Programming}

\textit{Parallel concurrent constraint programming} (PCCP) is a variant of determinate CCP~\cite{saraswat-semantic-1991} adapted to parallel programming.
The main changes are an explicit usage of lattice-typed variables instead of a constraint system, and a reduced set of instructions (essentially without recursive procedures).

A \textit{lattice} is an ordered structure $\langle L, \leq \rangle$ where $L$ is a set and $\leq: L \times L$ a partial order relation.
The \textit{join operation} $\sqcup: L \times L \to L$ is obtained by $x \leq y \Leftrightarrow x \sqcup y = y$.
When they exist, we write $\bot$ the smallest element and $\top$ the greatest element in $L$.
The dual lattice of $L$, written $L^\partial$, is the lattice with the reversed partial order ($\geq$ instead of $\leq$) on the same set of elements.
A function $f: L \to L$ is \textit{extensive} on $L$ if $x \leq f(x)$, \textit{monotone} if $x \leq y \Rightarrow f(x) \leq f(y)$ for all $x,y \in L$ and \textit{idempotent} if $f(x) = f(f(x))$.
Closure operators are extensive, monotone and idempotent functions.
The pointwise lifting is a lattice construction ordering functions $f,g \in L \to K$ where $f \leq g$ iff $\forall{x \in L},~f(x) \leq g(x)$.
The pointwise join is defined as $f \sqcup g \eqdef \lambda x.f(x) \sqcup g(x)$.
A fixed point of $f$ is an element $x \in L$ such that $f(x) = x$.
We write $\mathbf{fp}_x(f)$ the fixed point $f(f(...f(x)))$, and $\mathbf{fix}~f$ the function $\lambda x.\mathbf{fp}_x(f)$ mapping any element $x$ to its fixed point.
Details on lattice theory can be found in~\cite{davey-introduction-2002}.

We now overview several useful lattices and the Cartesian product of two lattices.
Firstly, we have the lattice of \textit{increasing integers} $\mathit{ZInc} = \langle \mathit{Z} \cup \{-\infty, \infty\}, \leq \rangle$, such that $\mathit{Z} \subset \mathbb{Z}$.
The order is given by the natural arithmetic order, the smallest element is $-\infty$ and the greatest one is $\infty$.
The join operation is defined as the maximum of the two elements: $a \sqcup b \eqdef \mathit{max}(a, b)$.
Similarly, we can define $\mathit{ZDec}$ for the decreasing integers (note that $\mathit{ZDec} = \mathit{ZInc}^\partial$), $\mathit{FInc}$ and $\mathit{FDec}$ for the increasing and decreasing floating-point numbers\footnote{We can follow the total order given by IEEE-754-2019~\cite{ieee-754-2019}.}, and $\mathit{BDec}$ and $\mathit{BInc}$ for the lattice of Boolean where $\{\mathit{true}, \mathit{false}\}$ where $\mathit{true} \leq \mathit{false}$ in $\mathit{BDec}$.
We note that these lattices form chains.
In order to obtain new lattices from these primitive ones, the Cartesian product of two lattices $L \times K$ can be used.
In this case, the order and join are defined pointwise, \emph{i.e.}, $(a,b) \leq (a',b') \Leftrightarrow a \leq a'$ and $b \leq b'$, and $(a,b) \sqcup (a',b') \eqdef (a \sqcup a', b \sqcup b')$.
The Cartesian product can be generalized to a $n$-ary product $L_1 \times \ldots \times L_n$.
It is equipped with a monotone projection function $\pi_i((a_1,\ldots,a_n)) \eqdef a_i$ and a monotone embedding function $\mathit{embed}_i((a_1,\ldots,a_n), b_i) \eqdef (a_1,\ldots,a_i \sqcup b_i,\ldots, a_n)$.
This way, we can obtain the lattice of integer intervals $\mathit{IZ} = \mathit{ZInc} \times \mathit{ZDec}$ where an element $(\ell, u)$ represents the set of values $\{v \in \mathbb{Z} \;|\; \ell \leq v \leq u\}$.
The lattice of intervals forms a partial order that is not a chain.
We write the monotone projection functions $\pi_1$ and $\pi_2$ more explicitly by $\lceil (\ell, u) \rceil \eqdef u$ and $\lfloor (\ell, u) \rfloor \eqdef \ell$.

Lattices are types in PCCP, and elements of these lattices are the data manipulated by PCCP programs.
The syntax of PCCP is defined as follows, where $x,y,y_1,\ldots,y_n \in \mathit{Vars}$ are variables, $L$ a lattice, $f$ a monotone function, and $b$ a Boolean variable of type $\mathit{BInc}$:

\begin{grammar}
<P, Q> ::= \texttt{if $b$ then $P$} \hfill \textit{ask statement}
\alt $x \leftarrow f(y_1,...,y_n)$ \hfill \textit{tell statement}
\alt $\exists{x\mathtt{\mathord{:}L}},~P$ \hfill \textit{local statement}
\alt $P\;\|\;Q$ \hfill \textit{parallel composition}
\end{grammar}
\noindent
The ask statement tests the value of $b$, and if it is equal to $\mathit{true}$, executes the process $P$.
The tell statement joins the value of the variable $x$ and the result of a monotone function $f$.
We declare a new variable using the local statement $\exists{x\mathtt{\mathord{:}L}},~P$ which initializes a variable $x$ of type $\mathtt{L}$ to $\bot$, such that $x$ is only visible to $P$.
The parallel composition executes two processes $P$ and $Q$ in parallel.
There is no explicit sequential composition, but sequentiality is still present due to ask statements---some processes might not be able to execute until some information is provided by other processes.
A useful syntactic sugar is to allow an expression $e$ in an ask statement $\mathtt{if}~e~\mathtt{then}~P$, which is equivalent to $\exists{b\mathtt{:}\mathit{BInc}},~b \leftarrow e \;\|\; \mathtt{if}~b~\mathtt{then}~P$ where $b$ is not free in $P$.

PCCP instructions are purposefully low-level because they should be directly implementable in a programming language such as C++ or Java.
For clarity, we introduce a light modelling layer on top of PCCP, which generates PCCP programs at compile-time.
Firstly, we use a generator expanding parallel processes at compile-time, \emph{e.g.}, $\forall{i \in \{1,2\}},~x_i \leftarrow \bot$ expands to $x_1 \leftarrow \bot \;\|\; x_2 \leftarrow \bot$.
We sometimes use generators inside expressions such as in  $\sum_{i \in S}~x_i$.
All indices in names are resolved at compile-time, hence $x_1$ is really just the name of a variable.
Our usage of generators is similar to those in traditional modelling languages such as \textsf{MiniZinc}~\cite{nethercote-minizinc:-2007}, and their meanings should be clear from the context.
Secondly, we compile constraints into corresponding PCCP programs using the function $\llbracket . \rrbracket: \Phi \to \mathit{Proc}$ where $\Phi$ is the set of constraints and $\mathit{Proc}$ the set of PCCP processes.
The obtained process $\llbracket \varphi \rrbracket$ is called a \textit{propagator} for the constraint $\varphi$.
Finally, we rely on the monotone function $\mathit{entailed}: \Phi \to \mathit{BInc}$ which maps a formula $\varphi$ to $\mathit{true}$ whenever $\varphi$ is entailed from the store, and to $\mathit{false}$ otherwise.

We illustrate PCCP on the resource-constrained project scheduling problem (RCPSP).
RCPSP is an optimization problem where one must find a minimal schedule such that resources usage do not exceed some capacities and some precedence constraints are satisfied.
RCPSP is defined by a tuple $\langle T, P, R \rangle$ where $T$ is the set of tasks, $P$ is the set of precedences among tasks, written $i \ll j$ to indicate that task $i$ must terminate before $j$ starts, and $R$ is the set of resources.
Each task $i \in T$ has a duration $d_i \in \mathbb{N}$ and, for each resource $k \in R$, a resource usage $r_{k,i} \in \mathbb{N}$.
Each resource $k \in R$ has a capacity $c_k \in \mathbb{N}$ quantifying how much of this resource is available in each instant.
The decision variables are the starting date $s_i$ for each task $i$, and $n^2$ Boolean variables $b_{i,j}$ such that $b_{i,j}$ is true iff the tasks $i$ and $j$ overlap.
This is a standard model of RCPSP which decomposes the cumulative global constraint~\cite{schutt-why-2009}.
The PCCP model for RCPSP is given as follows (where $h$ is the horizon, \emph{e.g.}, the sum of the durations):
\begin{displaymath}
\begin{array}{l}
\exists{s_1:\mathit{IZ}},\ldots,\exists{s_n:\mathit{IZ}},\\
\exists{b_{1,1}:\mathit{IZ}},\ldots,\exists{b_{n,n}:\mathit{IZ}},\\
\phantom{ \;\|\; } \forall{i \in [1..n]},~s_i \leftarrow (0, h) \\
\;\|\; \forall{i \in [1..n]},~\forall{j \in [1..n]},~b_{i,j} \leftarrow (0, 1) \\
\;\|\; \forall{(i \ll j) \in P},~\llbracket s_i + d_i \leq s_j \rrbracket \\
 \;\|\; \forall{j \in [1..n]},~\forall{i \in [1..n]},\\
\phantom{ \;\|\; }\quad \llbracket b_{i,j} \Leftrightarrow (s_i \leq s_j \land s_j < s_i + d_i) \rrbracket \\
 \;\|\; \forall{k \in R},~\forall{j \in [1..n]},~\llbracket \sum_{i \in [1..n]} r_{k,i} * b_{i,j} \leq c_k \rrbracket
\end{array}
\end{displaymath}
The compilation of constraints into PCCP processes is reminiscent of the compilation of constraints into indexicals, a simple CCP constraint system~\cite{cc-fd-impl,carlsson-indexicals-1997}.
The precedence and resource constraints are compiled as follows:
\begin{displaymath}
\begin{array}{l}
\llbracket x + y \leq c \rrbracket \eqdef x \leftarrow (\bot, c - \lfloor y \rfloor) \;\|\; y \leftarrow (\bot, c - \lfloor x \rfloor) \\[0.3cm]
\llbracket \sum_{i \in [1..n]} r_{k,i} * b_{i,j} \leq c_k \rrbracket \eqdef \exists{\mathit{lsum}\mathtt{:}\mathit{ZInc}},~\\
\qquad \phantom{ \;\|\; } \mathit{lsum} \leftarrow r_{k,1} * \lfloor b_{1,j} \rfloor + \ldots + r_{k,n} * \lfloor b_{n,j} \rfloor \\
\qquad  \;\|\; \mathtt{if}~r_{k,1} + \mathit{lsum} > c_k~\mathtt{then}~b_{1,j} \leftarrow (0,0) \\
\qquad  \;\|\; \ldots \\
\qquad  \;\|\; \mathtt{if}~r_{k,n} + \mathit{lsum} > c_k~\mathtt{then}~b_{n,j} \leftarrow (0,0)
\end{array}
\end{displaymath}
The entailment operations are given by:
\begin{displaymath}
\begin{array}{l}
\mathit{entailed}(x + y \leq c) \eqdef \lceil x \rceil + \lceil y \rceil \leq c \\[0.3cm]
\mathit{entailed}(\sum_{i \in [1..n]} r_{k,i} * b_{i,j} \leq c_k) \eqdef\\
\qquad r_{k,1} * \lceil b_{1,j} \rceil + \ldots + r_{k,n} * \lceil b_{n,j} \rceil \leq c_k
\end{array}
\end{displaymath}
We note that the partial order $\leq$ is viewed as a function returning $\mathit{true}$ whenever $a \leq b$ holds.
Logical connectors can also be defined in a similar manner, for instance we have:
\begin{displaymath}
\begin{array}{l}
\llbracket \varphi \land \psi \rrbracket \eqdef \llbracket \varphi \rrbracket \;\|\; \llbracket \psi \rrbracket \\[0.3cm]
\llbracket \varphi \Leftrightarrow \psi \rrbracket \eqdef \\
\phantom{ \;\|\; }\quad \mathtt{if}~\mathit{entailed}(\varphi)~\mathtt{then}~\llbracket \psi \rrbracket\\
 \;\|\; \quad \mathtt{if}~\mathit{entailed}(\psi)~\mathtt{then}~\llbracket \varphi \rrbracket\\
 \;\|\; \quad \mathtt{if}~\mathit{entailed}(\lnot\varphi)~\mathtt{then}~\llbracket \lnot\psi \rrbracket\\
 \;\|\; \quad \mathtt{if}~\mathit{entailed}(\lnot\psi)~\mathtt{then}~\llbracket \lnot\varphi \rrbracket
\end{array}
\end{displaymath}
We have defined various PCCP processes without proving the functions used were indeed monotone, as required by the PCCP model.
Automatically proving the monotonicity of arbitrary functions is a challenging task.
Rules are given by~\citet{entail-fd} for the restricted language of indexicals, and \citet{milano-towards-2012} discuss the topic on an extended indexicals language.
In PCCP, the next lemma provides a useful lattice-theoretic view of partial orders as monotone functions:
\begin{lemma}
Let $L$ be a lattice and $L^\partial$ its dual lattice, and $a \in L$ and $b \in L^\partial$.
The function $f: L \times L^\partial \to \mathit{BInc}$ defined as $f(a,b) = \mathit{true}$ iff $a \geq_L b$ and $\mathit{false}$ otherwise, is monotone.
\end{lemma}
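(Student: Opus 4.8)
The plan is to unfold the two order definitions involved---the pointwise order on the Cartesian product $L \times L^\partial$ and the order on $\mathit{BInc}$---and then observe that monotonicity of $f$ reduces to a single transitivity chain in $L$.

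First I would recall that the order on the product $L \times L^\partial$ is defined componentwise, so $(a,b) \leq (a',b')$ holds iff $a \leq_L a'$ and $b \leq_{L^\partial} b'$; since $L^\partial$ carries the reversed order of $L$, the second conjunct is exactly $b' \leq_L b$. On the codomain side, $\mathit{BInc}$ is the two-element chain with $\mathit{false} \leq \mathit{true}$.

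Next I would note that $f$ can only fail to be monotone on a pair of inputs $(a,b) \leq (a',b')$ with $f(a,b) = \mathit{true}$ and $f(a',b') = \mathit{false}$: whenever $f(a,b) = \mathit{false}$, the inequality $f(a,b) \leq_{\mathit{BInc}} f(a',b')$ holds trivially because $\mathit{false}$ is the bottom of $\mathit{BInc}$. So it suffices to assume $(a,b) \leq (a',b')$ and $a \geq_L b$, and to derive $a' \geq_L b'$. From the product order we have $b' \leq_L b$ and $a \leq_L a'$, and the hypothesis $f(a,b) = \mathit{true}$ gives $b \leq_L a$. Transitivity of $\leq_L$ along $b' \leq_L b \leq_L a \leq_L a'$ yields $b' \leq_L a'$, i.e. $a' \geq_L b'$, i.e. $f(a',b') = \mathit{true}$. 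Hence $f(a,b) \leq_{\mathit{BInc}} f(a',b')$ in every case, which is the claim.

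There is essentially no hard step here; the only point that requires care is the bookkeeping of the reversed order in the $L^\partial$ component, so that the monotone growth of $(a,b)$ corresponds to $a$ increasing while $b$ \emph{decreases} in $L$, and the inequality $b \geq_L b'$ is fed into the transitivity chain in the correct direction.
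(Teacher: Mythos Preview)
Your proof is correct and follows essentially the same transitivity argument as the paper: once the product order on $L \times L^\partial$ is unpacked as $a \leq_L a'$ and $b' \leq_L b$, monotonicity reduces to the chain $b' \leq_L b \leq_L a \leq_L a'$. The only cosmetic difference is that the paper checks monotonicity in each coordinate separately (fixing $b$ and varying $a$, then fixing $a$ and varying $b$), whereas you combine both into a single transitivity chain; neither presentation adds or removes any real content.
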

\begin{proof}
$a' \geq_L a$ implies $f(a, b) \geq f(a', a)$ since $a' \geq_L a \geq_L b$.
Also, $b' \geq_{L^\partial} b$ (equivalent to $b' \leq_L b$) implies $f(a,b) \geq f(a,b')$ since $b' \leq_L b \leq_L a$.
\end{proof}
A particular case of this lemma is very useful to prove monotonicity of the functions in the previous examples.
When $c \in L$ is a constant and $a$ is a variable taking a value in $L$, a corollary is that $c \geq a$ (equivalently $a \leq c$) is monotone.
For instance, $\lceil x \rceil + \lceil y \rceil \leq c$ is monotone because the projection functions and the integer addition are monotone (and the functional composition of monotone functions is monotone as well).

Further discussion on monotone functions is out of scope of this paper, and we will suppose that all functions in PCCP processes are monotone.
Our focus is now on proving that PCCP programs can be correctly executed on a parallel hardware, and provide the same result than if executed on a sequential machine.

\section{Semantics and Correctness}

We introduce the denotational semantics, useful to prove properties on PCCP programs, the operational semantics, useful for implementation purposes, and the formal equivalence between the two.
A PCCP process is an extensive and monotone function over a Cartesian product $\mathit{Store} = L_1 \times \ldots \times L_n$ storing the values of all local variables.
Because PCCP does not have recursive definitions, the number of local variables is finite and known at compile-time.
Therefore, all local statements $\exists{x\mathtt{:}L},~P$ can be erased at compile-time by replacing each variable's name with an index in $\mathit{Store}$ representing the lattice element of this variable.
For instance, given a store $s \in \mathit{Store}$, we write $s(x) \eqdef \pi_x(s)$ the value of the variable $x$.
The three remaining statements are the ask, tell and parallel operations.

Our semantic treatment of PCCP is based on the one of CCP~\cite{saraswat-semantic-1991}.
However, instead of a constraint system, we manipulate a Cartesian product of lattices, allowing for a closer mapping between the theory and a practical implementation.
We first give the denotational semantics of PCCP by defining a function $\mathcal{D}: \mathit{Proc} \to (\mathit{Store} \to \mathit{Store})$:
\begin{displaymath}
\begin{array}{l}
\mathcal{D}(x \leftarrow f(y_1,..,y_n)) \triangleq \lambda s.\mathit{embed}_x(s, f(s(y_1),..,s(y_n))) \\
\mathcal{D}(\mathtt{if}~b~\mathtt{then}~P) \eqdef \lambda s.\mathit{if} s(b)~\mathit{then}~\mathcal{D}(P)(s)~\mathit{else}~s \\
\mathcal{D}(P\;\|\;Q) \eqdef \mathcal{D}(P) \sqcup \mathcal{D}(Q)
\end{array}
\end{displaymath}
The tell operation joins the result of the function with the component indexed $x$ of the store.
The ask operation verifies if the variable $b$ is $\mathit{true}$ in the store, in which case it maps to the denotation of $P$, and otherwise behaves like the identity function.
Finally the parallel composition is the join of the denotations of the two processes.
This mathematical definition seems to forbid $P$ and $Q$ to share the store $s$ as it is ``copied''.
When we introduce the operational semantics, this will not be the case anymore, but still, we will show that both semantics are equivalent.

Applying the function $\mathcal{D}(P)$ to the store might not yield a fixed point, and therefore it must be applied several times.
We characterize this function in the next theorem, which is largely due to~\citet{saraswat-semantic-1991}.
\begin{theorem}
Let $L_1,\ldots,L_n$ be complete lattices, and $\mathit{Store} = L_1 \times \ldots \times L_n$.
Then, for any process $P$, the least fixed point of $\mathcal{D}(P)$ exists and is unique.
Moreover, $\mathbf{fix}~\mathcal{D}(P)$ is a closure operator.
\label{closure-op}
\end{theorem}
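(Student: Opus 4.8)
The plan is to establish three facts about $\mathcal{D}(P)$ as a function on $\mathit{Store}$: that it is monotone, that it is extensive (inflationary), and then to invoke the standard lattice-theoretic fact that iterating an extensive monotone function on a complete lattice reaches a least fixed point above any starting point, and that the resulting map $\mathbf{fix}~\mathcal{D}(P)$ is a closure operator. The first two facts are proved by structural induction on the syntax of $P$ (ask, tell, parallel composition — local statements having been compiled away). For the \textbf{tell} case, $\mathcal{D}(x \leftarrow f(\bar y)) = \lambda s.\mathit{embed}_x(s, f(s(\bar y)))$: extensivity is immediate since $\mathit{embed}_x(s,b) = (s_1,\ldots,s_x \sqcup b,\ldots,s_n) \geq s$; monotonicity follows because projection $\pi_{y_i}$, the given function $f$, and $\mathit{embed}_x$ are all monotone, and composition preserves monotonicity. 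For the \textbf{ask} case, $\mathcal{D}(\mathtt{if}~b~\mathtt{then}~P)(s)$ is either $s$ or $\mathcal{D}(P)(s)$; extensivity holds in both branches by the induction hypothesis, and monotonicity needs the observation that on $\mathit{BInc}$ the value $\mathit{true}$ is the bottom, so once $s(b) = \mathit{true}$ it stays $\mathit{true}$ as $s$ grows — hence a larger store cannot switch the branch from "execute $P$" back to "identity"; one then checks the three sub-cases ($s \leq s'$ with $s(b)$ false/true) using the induction hypothesis and extensivity of $\mathcal{D}(P)$. For \textbf{parallel composition}, $\mathcal{D}(P \;\|\; Q) = \mathcal{D}(P) \sqcup \mathcal{D}(Q)$ is the pointwise join of two extensive monotone functions, which is again extensive ($s \leq \mathcal{D}(P)(s) \leq \mathcal{D}(P)(s) \sqcup \mathcal{D}(Q)(s)$) and monotone (join of monotone functions is monotone on a complete lattice).

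Once extensivity and monotonicity of $g \eqdef \mathcal{D}(P)$ are in hand, I would argue existence of the least fixed point above a given $s_0$ as follows. Consider the set $U = \{s \in \mathit{Store} \mid s \geq s_0 \text{ and } g(s) \leq s\}$ of post-fixed points above $s_0$; it is nonempty since $\top \in U$ (using that each $L_i$ is complete, hence bounded). Let $m = \bigsqcap U$, which exists by completeness. Using monotonicity one shows $g(m) \leq m$ and $g(m) \geq s_0$, so $m \in U$; and extensivity gives $m \leq g(m)$, hence $g(m) = m$. Minimality among fixed points above $s_0$ is clear since any such fixed point lies in $U$. This defines $\mathbf{fp}_{s_0}(g)$ unambiguously, matching the notation $\mathbf{fp}_x(f)$ from the preamble (in the finite-height or continuous case this least fixed point is exactly the iteration $g(g(\cdots g(s_0)))$, which is why the operational iteration will compute it). Uniqueness of "the" least fixed point of $\mathcal{D}(P)$ referenced in the statement is then the instance $s_0 = \bot$.

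Finally, to show $h \eqdef \mathbf{fix}~g = \lambda s.\mathbf{fp}_s(g)$ is a closure operator, I verify the three defining properties. \emph{Extensivity}: $h(s) \geq s$ by construction (we took the least post-fixed point \emph{above} $s$). \emph{Idempotence}: $h(s)$ is a fixed point of $g$, so it is its own least post-fixed point above itself, giving $h(h(s)) = h(s)$. \emph{Monotonicity}: if $s \leq s'$ then any post-fixed point above $s'$ is in particular a post-fixed point above $s$, so $h(s) = \bigsqcap U_s \leq \bigsqcap U_{s'} = h(s')$ since $U_{s'} \subseteq U_s$. I expect the main obstacle to be the ask case of the induction: monotonicity there genuinely depends on the lattice orientation of $\mathit{BInc}$ (with $\mathit{true} = \bot$), so that information accumulation can only ever enable, never disable, a guarded process — this is the one place where the argument is not purely formal bookkeeping and where getting the order of $\mathit{BInc}$ wrong would break the theorem. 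Everything else reduces to routine verification plus the cited fixed-point theory from Davey and Priestley, and the bulk of the result is, as the paper notes, already due to Saraswat, Rinard, and Panangaden.
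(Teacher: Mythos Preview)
Your approach matches the paper's: establish that $\mathcal{D}(P)$ is extensive and monotone, invoke Tarski on the complete lattice $\mathit{Store}$ to obtain the least fixed point, and conclude that $\mathbf{fix}~\mathcal{D}(P)$ is a closure operator; your structural induction and explicit Knaster--Tarski construction are simply more detailed than the paper's terse argument, which in fact glosses over the ask case entirely. One correction, though: in $\mathit{BInc}$ the element $\mathit{true}$ is the \emph{top}, not the bottom (the paper fixes $\mathit{true} \leq \mathit{false}$ in $\mathit{BDec}$, so dually $\mathit{false} \leq \mathit{true}$ in $\mathit{BInc}$). Your persistence claim ``once $s(b) = \mathit{true}$ it stays $\mathit{true}$ as $s$ grows'' is correct precisely \emph{because} $\mathit{true}$ is maximal; with $\mathit{true} = \bot$ a larger store could flip the guard to $\mathit{false}$, disabling a previously enabled branch, and monotonicity of the ask case would fail---so the orientation you wrote is exactly the one that breaks the theorem, not the one that saves it.
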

\begin{proof}
$\mathcal{D}(P)$ is extensive and monotone as $\mathit{embed}$ and $\sqcup$ are extensive and monotone operations.
Moreover, $\mathit{Store}$ is a complete lattice as the Cartesian product of complete lattices yields a complete lattice.
By Tarski fixed point theorem, any monotone function on a complete lattice has a least fixed point.
This shows $\mathcal{D}(P)$ is completely described by its fixed points, thus $\mathbf{fix}~\mathcal{D}(P)$ is idempotent and is a closure operator since $\mathcal{D}(P)$ is already extensive and monotone.
\end{proof}
The claim made earlier that we can write PCCP programs with the same hypothesis than in sequential programs can be made more precise now.
Given a process $P$, if we replace all parallel compositions by sequential compositions (a transformation we write $\mathtt{seq}~P$), then the denotation of both programs remains the same.
We define sequential composition as:
\begin{displaymath}
\mathcal{D}(P\;\mathtt{;}\;Q) \eqdef \mathcal{D}(Q) \circ \mathcal{D}(P)
\end{displaymath}
\begin{proposition}
$\mathbf{fix}~\mathcal{D}(\mathtt{seq}~P) = \mathbf{fix}~\mathcal{D}(P)$
\label{denotational-seq}
\end{proposition}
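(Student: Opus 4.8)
The plan is to show that $\mathcal{D}(\mathtt{seq}~P)$ and $\mathcal{D}(P)$, while not equal as functions, have the same set of fixed points, which by Theorem~\ref{closure-op} forces their closure operators $\mathbf{fix}$ to coincide. The key observation is that replacing $\sqcup$ by functional composition does not change what it means to be stable: an element $s$ satisfies $\mathcal{D}(P_1)(s) \sqcup \mathcal{D}(P_2)(s) = s$ exactly when $\mathcal{D}(P_1)(s) \leq s$ and $\mathcal{D}(P_2)(s) \leq s$, which (using extensivity) means $\mathcal{D}(P_1)(s) = s$ and $\mathcal{D}(P_2)(s) = s$; and this is equivalent to $\mathcal{D}(P_2)(\mathcal{D}(P_1)(s)) = s$, again by extensivity of each factor.

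First I would set up the induction on the structure of $P$. The base cases are the tell statement $x \leftarrow f(y_1,\ldots,y_n)$ and the ask statement $\mathtt{if}~b~\mathtt{then}~P'$; here $\mathtt{seq}$ acts trivially (there is no parallel composition to rewrite at the top level, and by induction $\mathtt{seq}$ leaves any inner process with unchanged denotation), so there is nothing to prove beyond invoking the induction hypothesis on the sub-process of the ask. The only interesting case is parallel composition $P = P_1 \;\|\; P_2$, where $\mathcal{D}(P) = \mathcal{D}(P_1) \sqcup \mathcal{D}(P_2)$ and $\mathcal{D}(\mathtt{seq}~P) = \mathcal{D}(\mathtt{seq}~P_2) \circ \mathcal{D}(\mathtt{seq}~P_1)$. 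By the induction hypothesis, $\mathbf{fix}~\mathcal{D}(\mathtt{seq}~P_i) = \mathbf{fix}~\mathcal{D}(P_i)$ for $i=1,2$, and in particular $\mathcal{D}(\mathtt{seq}~P_i)$ and $\mathcal{D}(P_i)$ have the same fixed points (a closure operator determines its fixed-point set). So it suffices to prove the fixed-point sets of $\mathcal{D}(P_1) \sqcup \mathcal{D}(P_2)$ and of $\mathcal{D}(P_2) \circ \mathcal{D}(P_1)$ agree when $\mathcal{D}(P_i)$ is extensive and monotone.

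Second I would prove that set equality directly. For $(\subseteq)$: if $(\mathcal{D}(P_1) \sqcup \mathcal{D}(P_2))(s) = s$, then $\mathcal{D}(P_1)(s) \leq s$, and extensivity gives $\mathcal{D}(P_1)(s) = s$; similarly $\mathcal{D}(P_2)(s) = s$; hence $\mathcal{D}(P_2)(\mathcal{D}(P_1)(s)) = \mathcal{D}(P_2)(s) = s$. For $(\supseteq)$: if $\mathcal{D}(P_2)(\mathcal{D}(P_1)(s)) = s$, then extensivity of $\mathcal{D}(P_2)$ gives $\mathcal{D}(P_1)(s) \leq \mathcal{D}(P_2)(\mathcal{D}(P_1)(s)) = s$, and extensivity of $\mathcal{D}(P_1)$ gives $s \leq \mathcal{D}(P_1)(s)$, so $\mathcal{D}(P_1)(s) = s$; substituting back, $\mathcal{D}(P_2)(s) = s$; therefore $(\mathcal{D}(P_1) \sqcup \mathcal{D}(P_2))(s) = s \sqcup s = s$. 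Once both denotations have the same fixed points and both are extensive and monotone (the composition of extensive monotone maps is extensive and monotone), Theorem~\ref{closure-op} applied to each shows $\mathbf{fix}$ of each is the closure operator onto that common fixed-point set, so the two closure operators are identical.

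I do not expect a serious obstacle here; the argument is essentially a bookkeeping exercise in extensivity. The one point requiring mild care is the induction on ask statements: $\mathtt{seq}$ must be defined to recurse into the body $P'$ of $\mathtt{if}~b~\mathtt{then}~P'$, and one needs that $\mathbf{fix}~\mathcal{D}(P') = \mathbf{fix}~\mathcal{D}(\mathtt{seq}~P')$ implies $\mathbf{fix}~\mathcal{D}(\mathtt{if}~b~\mathtt{then}~P') = \mathbf{fix}~\mathcal{D}(\mathtt{if}~b~\mathtt{then}~\mathtt{seq}~P')$ — which again follows by checking fixed points: if $s(b) = \mathit{false}$ both denotations are the identity at $s$, and if $s(b) = \mathit{true}$ they reduce to $\mathcal{D}(P')$ and $\mathcal{D}(\mathtt{seq}~P')$ respectively, whose fixed points coincide by hypothesis, while one still must note that whether a given $s$ is a fixed point cannot depend on the hidden behaviour since $\mathcal{D}(P')$ is extensive. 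The slightly subtle case is an $s$ with $s(b) = \mathit{true}$ that is a fixed point of the identity branch reasoning — but since $b$ is itself a store component and neither transformation can decrease it, consistency is immediate. Beyond this the proof is routine.
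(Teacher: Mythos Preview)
Your argument is correct, and it takes a genuinely different route from the paper. The paper does not set up a structural induction or compare fixed-point \emph{sets}; instead it works with the pointwise order on functions. Writing $f=\mathcal{D}(P_1)$ and $g=\mathcal{D}(P_2)$, the paper observes $f \sqcup g \leq f \circ g$ (since $f \leq f\circ g$ and $g \leq f\circ g$ by extensivity) and $(f\sqcup g)\circ(f\sqcup g) \geq f\circ g$ (since $f\sqcup g \geq g$ and $f\circ(f\sqcup g) \geq f\circ g$ by monotonicity), and from these two inequalities deduces $\mathbf{fix}\,(f\sqcup g) = \mathbf{fix}\,(f\circ g)$ directly, without invoking the ``closure operator is determined by its fixed points'' principle.

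What each approach buys: your fixed-point-set argument is more elementary---the core equality $\mathrm{Fix}(f\sqcup g)=\mathrm{Fix}(f)\cap\mathrm{Fix}(g)=\mathrm{Fix}(g\circ f)$ uses only extensivity, and you are more explicit about the induction (in particular the ask case, which the paper glosses over). The paper's sandwich argument, on the other hand, yields a quantitative by-product that the authors actually exploit in the text: one sequential pass dominates one parallel pass pointwise ($f\circ g \geq f\sqcup g$), so sequential iteration converges in no more rounds than parallel iteration. Your proof establishes the proposition just as well but does not surface this comparison. One small presentational point: when you write ``it suffices to prove the fixed-point sets of $\mathcal{D}(P_1)\sqcup\mathcal{D}(P_2)$ and of $\mathcal{D}(P_2)\circ\mathcal{D}(P_1)$ agree,'' you should really be comparing $\mathcal{D}(P_1)\sqcup\mathcal{D}(P_2)$ with $\mathcal{D}(\mathtt{seq}\,P_2)\circ\mathcal{D}(\mathtt{seq}\,P_1)$; the bridge is that your lemma identifies both fixed-point sets with $\mathrm{Fix}(\mathcal{D}(P_1))\cap\mathrm{Fix}(\mathcal{D}(P_2))$ and $\mathrm{Fix}(\mathcal{D}(\mathtt{seq}\,P_1))\cap\mathrm{Fix}(\mathcal{D}(\mathtt{seq}\,P_2))$ respectively, which coincide by the induction hypothesis. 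This is implicit in what you wrote but worth stating.
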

\begin{proof}
Firstly, we notice that Theorem~\ref{closure-op} still holds for sequential programs as the functional composition preserves extensiveness and monotonicity.
Hence both $\mathcal{D}(\mathtt{seq}~P)$ and $\mathcal{D}(P)$ have a least fixed point, and we must show they coincide.
It boils down to proving the fixed points of $f \circ g$ equals the fixed points of $f \sqcup g$.
Because $f \leq f \circ g$ and $g \leq f \circ g$, necessarily we have $f \sqcup g \leq f \circ g$, and by monotonicity of $f$ and $g$, $\mathbf{fix}~f \sqcup g \leq \mathbf{fix}~f \circ g$.
The other direction is shown by $(f \sqcup g) \circ (f \sqcup g) \geq f \circ g$ which holds because $f \sqcup g \geq g$ and $f \circ (f \sqcup g) \geq f \circ g$.
Moreover, by monotonicity of $f$ and $g$, we have $(f \sqcup g) \circ (f \sqcup g) \geq f \circ g \Rightarrow \mathbf{fix}~(f \sqcup g) \circ (f \sqcup g) \geq \mathbf{fix}~f \circ g \Leftrightarrow \mathbf{fix}~f \sqcup g \geq \mathbf{fix}~f \circ g$.
Therefore, $\mathbf{fix}~f \sqcup g = \mathbf{fix}~f \circ g$.
\end{proof}
A practical implication of this proposition is that we are not forced to fully parallelize a PCCP program, and can choose, when appropriate, to use a sequential composition.
This choice depends on the number of threads available on a particular hardware, and how costly it is to start a thread.
Moreover, the proof highlights that executing once all processes sequentially always leads to a stronger store than executing once all processes in parallel.
In practice, however, a sequential iteration may take longer to complete than a parallel iteration since it can only be executed on a single core.

The operational semantics of PCCP is more compactly described on a \textit{guarded normal form} (GNF), which is obtained by lifting all parallel compositions at top-level.
We obtain a set of guarded commands of the form $\{b_1,\ldots,b_n\} \Rightarrow x \leftarrow f(y_1,...,y_m)$ that can be executed in parallel.
This program transformation is obtained by the following function:
\begin{displaymath}
\begin{array}{l}
\mathit{gnf}(A, x \leftarrow f(y_1,...,y_n)) \eqdef \{A \Rightarrow x \leftarrow f(y_1,...,y_n)\} \\
\mathit{gnf}(A, \mathtt{if}~b~\mathtt{then}~P) \eqdef \mathit{gnf}(A \cup \{b\}, P) \\
\mathit{gnf}(A, P\;\|\;Q) \eqdef \mathit{gnf}(A, P) \cup \mathit{gnf}(A, Q)
\end{array}
\end{displaymath}
This function essentially duplicates the ask conditions to lift nested parallel processes.
We note that a guarded command $\{b_1,\ldots,b_n\} \Rightarrow x \leftarrow f(y_1,...,y_n)$ is just a compact syntax for the PCCP program $\mathtt{if}~b_1~\mathtt{then}~\ldots~\mathtt{if}~b_n~\mathtt{then}~x \leftarrow f(y_1,...,y_n)$. 
Next, we show $\mathit{gnf}$ preserves fixed points.
\begin{proposition}
$\mathbf{fix}~\mathcal{D}(\mathit{gnf}(\{\},P)) = \mathbf{fix}~\mathcal{D}(P)$
\label{denotation-gnf}
\end{proposition}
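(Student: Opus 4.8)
The plan is to prove the stronger statement that $\mathcal{D}(\mathit{gnf}(\{\},P))$ and $\mathcal{D}(P)$ are equal \emph{as functions} on $\mathit{Store}$; the claim on least fixed points then follows at once, since $\mathbf{fix}$ depends only on its functional argument. To get there I would generalise over the accumulator $A$. For a finite set $A = \{b_1,\ldots,b_k\}$ of Boolean variables, write $\mathtt{if}~A~\mathtt{then}~P$ for the nested ask $\mathtt{if}~b_1~\mathtt{then}~\cdots~\mathtt{if}~b_k~\mathtt{then}~P$, with $\mathtt{if}~\emptyset~\mathtt{then}~P \eqdef P$, and read a set of guarded commands as the parallel composition of the processes it abbreviates (a guarded command being exactly such a nested ask, as noted above). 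The first small fact to record is that unfolding the denotational clauses gives $\mathcal{D}(\mathtt{if}~A~\mathtt{then}~P) = \lambda s.\,\mathit{if}~\bigwedge_{b \in A} s(b)~\mathit{then}~\mathcal{D}(P)(s)~\mathit{else}~s$; in particular the order in which the guards of $A$ are listed, and any repetition among them, is irrelevant, and $\mathcal{D}(\mathtt{if}~(A \cup \{b\})~\mathtt{then}~P) = \mathcal{D}(\mathtt{if}~A~\mathtt{then}~(\mathtt{if}~b~\mathtt{then}~P))$.

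Next I would prove, by structural induction on $P$, that $\mathcal{D}(\mathit{gnf}(A,P)) = \mathcal{D}(\mathtt{if}~A~\mathtt{then}~P)$ for every finite $A$. For a tell $x \leftarrow f(y_1,\ldots,y_n)$ this is immediate, since $\mathit{gnf}(A, x \leftarrow f(y_1,\ldots,y_n)) = \{A \Rightarrow x \leftarrow f(y_1,\ldots,y_n)\}$ denotes precisely $\mathtt{if}~A~\mathtt{then}~(x \leftarrow f(y_1,\ldots,y_n))$. For an ask $\mathtt{if}~b~\mathtt{then}~P$, the definition gives $\mathit{gnf}(A, \mathtt{if}~b~\mathtt{then}~P) = \mathit{gnf}(A \cup \{b\}, P)$, whose denotation equals $\mathcal{D}(\mathtt{if}~(A\cup\{b\})~\mathtt{then}~P)$ by the induction hypothesis, which in turn equals $\mathcal{D}(\mathtt{if}~A~\mathtt{then}~(\mathtt{if}~b~\mathtt{then}~P))$ by the small fact above. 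For a parallel composition $P \| Q$ we have $\mathit{gnf}(A, P\|Q) = \mathit{gnf}(A,P) \cup \mathit{gnf}(A,Q)$, and the parallel composition of a union of two sets of guarded commands has denotation equal to the join of the denotations of the two parallel compositions (this is where associativity, commutativity and idempotence of $\sqcup$ are used, so that a possible overlap of the two sets is harmless); hence the denotation is $\mathcal{D}(\mathit{gnf}(A,P)) \sqcup \mathcal{D}(\mathit{gnf}(A,Q)) = \mathcal{D}(\mathtt{if}~A~\mathtt{then}~P) \sqcup \mathcal{D}(\mathtt{if}~A~\mathtt{then}~Q)$ by the induction hypothesis. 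A case split on whether every guard of $A$ holds at a given store $s$ finishes it: if they all do, both sides evaluate to $\mathcal{D}(P)(s) \sqcup \mathcal{D}(Q)(s) = \mathcal{D}(P\|Q)(s)$; if not, both sides are $s \sqcup s = s$; so this join equals $\mathcal{D}(\mathtt{if}~A~\mathtt{then}~(P\|Q))$.

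Finally I would instantiate the generalised statement at $A = \emptyset$, obtaining $\mathcal{D}(\mathit{gnf}(\{\},P)) = \mathcal{D}(\mathtt{if}~\emptyset~\mathtt{then}~P) = \mathcal{D}(P)$, and conclude $\mathbf{fix}~\mathcal{D}(\mathit{gnf}(\{\},P)) = \mathbf{fix}~\mathcal{D}(P)$. I do not expect a genuine obstacle here: the only delicate bookkeeping is to pin down the interpretation of a \emph{set} of guarded commands as a parallel composition and check it is well defined independently of enumeration order — which rests precisely on the associativity, commutativity and idempotence of the lattice join — together with the harmless fact that listing a guard twice in an ask prefix does not change the denotation. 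Once these conventions are fixed, each inductive case is a one-line unfolding.
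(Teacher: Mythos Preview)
Your proof is correct and follows the same overall shape as the paper's---structural induction on $P$, with the distributivity of ask over parallel composition as the crux of the parallel case---but you sharpen it in two respects. First, you prove the stronger claim $\mathcal{D}(\mathit{gnf}(\{\},P)) = \mathcal{D}(P)$ as functions, whereas the paper only states the distributivity law at the level of $\mathbf{fix}$ (in fact, as your case split on $\bigwedge_{b\in A} s(b)$ shows, the law already holds denotationally before taking fixed points, so your strengthening is free). Second, you make explicit the generalisation over the accumulator $A$ that the induction actually needs, and you pin down the well-definedness of interpreting a \emph{set} of guarded commands as a parallel composition via the ACI properties of $\sqcup$; the paper's one-line sketch leaves both of these implicit. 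What the paper's version buys is brevity; what yours buys is a self-contained argument with no hidden induction hypothesis and a slightly stronger conclusion.
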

\begin{proof}
The ask operation distributes over parallel composition, \emph{i.e.}, $\mathbf{fix}~\mathcal{D}(\mathtt{if}~b~\mathtt{then}~(P \;\|\; Q))$ equals $\mathbf{fix}~\mathcal{D}(\mathtt{if}~b~\mathtt{then}~P \;\|\; \mathtt{if}~b~\mathtt{then}~Q)$.
The full proof is obtained by structural induction over the processes in $\mathit{gnf}$.
\end{proof}
We describe the operational semantics of PCCP using a transition function $\hookrightarrow$ between states.
A state is a couple $\langle s, G \rangle$ where $s$ is the store and $G$ a set of guarded commands.
The operational semantics can be specified using a single rule on PCCP programs in GNF:
\begin{mathpar}
\inferrule[select]
{(\{b_1,\ldots, b_n\} \Rightarrow x \leftarrow f(y_1,...,y_m)) \in G \\ \bigwedge_{i \leq n} s(b_i)}
{\langle s, G \rangle \hookrightarrow \langle \mathit{embed}_x(s, f(s(y_1), \ldots, s(y_m))), G \rangle} \\
\end{mathpar}
The execution of a PCCP program is a possibly infinite sequence of states $\langle s_1, G \rangle \hookrightarrow \ldots \hookrightarrow \langle s_n, G \rangle \hookrightarrow \ldots$.
In each transition, we select one process to be executed.
The transition function is monotone and extensive because $\mathit{embed}_x$ is monotone and extensive, and $f$ and the condition $\bigwedge_{i \leq n} s(b_i)$ are monotone.

In the next section, we motivate why this seemingly sequential operational semantics is actually a correct representation of a low-level parallel execution.
Before that, we must discuss the nondeterminism of the rule \textsc{select} which allows us, at each step, to execute any process in $G$.
Nondeterminism is important to avoid forcing a particular scheduling strategy of the processes in the semantics.
The only requirement is fairness, otherwise the same process could be selected over and over again.
\begin{definition}[Fairness]
A scheduling strategy is fair if, for each process $P \in G$, it generates transitions such that:
\begin{displaymath}
\begin{array}{l}
\forall{i \in \mathbb{Z}},~\exists{j \in \mathbb{Z}},~j \geq i \;\land \\
\qquad \langle s_j, G\rangle \hookrightarrow \langle s_{j+1}, G \rangle \text{ selects the process $P$}
\end{array}
\end{displaymath}
\label{fairness-threads}
\end{definition}
A result of~\citet{cousot-chaotic-iterations-1977} on chaotic iterations, applied to constraint programming by~\citet{apt-essence-1999}, guarantees that the limit of all fair scheduling strategies coincide.
When the fixed point of $\hookrightarrow$ is reachable in a finite number of steps, which is the case if the lattices underlying the store are finite\footnote{Or satisfy the ascending chain condition, \emph{i.e.}, for all chains $x_1 \leq \ldots \leq x_n \leq \ldots$ there exists $i \in \mathbb{Z}$ such that $x_i = x_{i+1}$.}, the choice of a fair scheduling strategy has no impact on the result computed.
This is particularly important in practice as from one parallel hardware to another, or even from one execution to another, the scheduling strategy might change.

We denote by $\mathcal{O}(P)$ the function mapping a store $s$ to a store $s'$, such that $\langle s, \mathit{gnf}(\{\},P)\rangle \hookrightarrow \langle s', \mathit{gnf}(\{\},P)\rangle$ where $\hookrightarrow$ follows a fair scheduling strategy.
\begin{theorem}
$\mathbf{fix}~\mathcal{D}(P) = \mathbf{fix}~\mathcal{O}(P)$
\end{theorem}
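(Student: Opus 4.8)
The plan is to bridge the operational and denotational semantics through the guarded normal form and the theory of chaotic iterations. First I would invoke Proposition~\ref{denotation-gnf} to reduce the claim to $\mathbf{fix}~\mathcal{D}(\mathit{gnf}(\{\},P)) = \mathbf{fix}~\mathcal{O}(P)$, so that we may work throughout with the flat parallel composition $G = \mathit{gnf}(\{\},P) = \{g_1,\ldots,g_m\}$ of guarded commands that $\mathcal{O}(P)$ is already defined on. Writing $d_i \eqdef \mathcal{D}(g_i)$ for the denotation of the $i$-th guarded command (unfolding the nested ask, $d_i$ is the identity on stores where the guards of $g_i$ fail, and a single $\mathit{embed}_x$ application otherwise), the clause for parallel composition gives $\mathcal{D}(G) = d_1 \sqcup \ldots \sqcup d_m$, and each $d_i$ is extensive and monotone, exactly as already noted for the transition function.

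Second, I would characterize the fixed points of $\mathcal{D}(G)$. Since every $d_i$ is extensive, $\bigsqcup_i d_i(t) \geq d_j(t) \geq t$ for each $j$, so $t$ is a fixed point of $\mathcal{D}(G)$ if and only if $d_i(t) = t$ for all $i$; that is, the fixed points of $\mathcal{D}(G)$ are precisely the common fixed points of the family $\{d_i\}$. Moreover, any ascending iteration built from the $d_i$ starting at a store $s$ is, by extensiveness, an increasing chain, and by monotonicity every common fixed point of $\{d_i\}$ lying above $s$ is an upper bound of that chain. Hence the least fixed point of $\mathcal{D}(G)$ above $s$ equals the least common fixed point of $\{d_i\}$ above $s$, and by Theorem~\ref{closure-op} this is $\mathbf{fp}_s(\mathcal{D}(G)) = (\mathbf{fix}~\mathcal{D}(G))(s)$, which by Proposition~\ref{denotation-gnf} equals $(\mathbf{fix}~\mathcal{D}(P))(s)$.

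Third, I would identify $\mathcal{O}(P)(s)$ with the same least common fixed point. A run $\langle s, G\rangle \hookrightarrow \langle s_1, G\rangle \hookrightarrow \cdots$ under a fair scheduling applies, at each step, the rule \textsc{select} for some $g_i$ whose guards hold, i.e. $s_{k+1} = d_i(s_k)$; a step for a $g_i$ whose guards do not hold would act as the identity, so it is immaterial whether it is taken. Thus the run realizes a chaotic iteration of the extensive monotone family $\{d_i\}$ in the sense of~\citet{cousot-chaotic-iterations-1977}, and fairness (Definition~\ref{fairness-threads}) forces its limit to be stable under every $d_i$; by the cited result, as applied to constraint propagation by~\citet{apt-essence-1999}, this limit is independent of the fair strategy and is the least common fixed point of $\{d_i\}$ above $s$. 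Under the ascending chain condition it is reached after finitely many transitions and is by definition $\mathcal{O}(P)(s)$. Chaining the three steps gives $\mathcal{O}(P)(s) = (\mathbf{fix}~\mathcal{D}(P))(s)$ for every $s$, so $\mathcal{O}(P) = \mathbf{fix}~\mathcal{D}(P)$ and therefore $\mathbf{fix}~\mathcal{O}(P) = \mathbf{fix}~\mathcal{D}(P)$, the outer $\mathbf{fix}$ being redundant because $\mathbf{fix}~\mathcal{D}(P)$ is idempotent.

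The main obstacle I expect is the third step: making the correspondence between an operational transition sequence and a genuine chaotic iteration fully rigorous, in particular reconciling the fact that \textsc{select} can only fire a guarded command whose guards currently hold with the fairness requirement that every process be scheduled infinitely often, and arguing that the attained store is the \emph{least} fixed point above $s$ rather than merely \emph{some} fixed point. I would discharge this by the observation that disabled guarded commands contribute only identity steps, so the run is a fair iteration of $\{d_i\}$, together with the increasing-chain/upper-bound argument of the second step to pin down leastness; existence and uniqueness of the limit, and termination under the ascending chain condition, are then exactly what~\citet{cousot-chaotic-iterations-1977,apt-essence-1999} supply.
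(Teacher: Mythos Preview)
Your argument is correct and reaches the same conclusion, but you take a somewhat different route than the paper. After reducing to GNF via Proposition~\ref{denotation-gnf}, the paper additionally invokes Proposition~\ref{denotational-seq} to replace the pointwise join $d_1 \sqcup \ldots \sqcup d_m$ by the sequential composition $d_m \circ \ldots \circ d_1$; iterating the latter is then recognized as one particular fair scheduling (round-robin), so that the result of~\citet{apt-essence-1999} directly equates its limit with that of any other fair operational execution, in particular $\mathbf{fix}~\mathcal{O}(P)$. You bypass Proposition~\ref{denotational-seq} entirely and instead argue from first principles that the fixed points of $\bigsqcup_i d_i$ are exactly the common fixed points of the family $\{d_i\}$, then appeal to chaotic iterations to identify the operational limit with the least such common fixed point above $s$. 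Your route is a bit more self-contained and makes the role of the common-fixed-point set explicit; the paper's is shorter because it reuses an already-proved proposition. One small slip worth tightening: in the paper, $\mathcal{O}(P)$ is defined as a \emph{single} transition step (so that $\mathbf{fix}~\mathcal{O}(P)$ is the fair iteration to the limit), not as the limit itself; your closing sentence equating $\mathcal{O}(P)$ with $\mathbf{fix}~\mathcal{D}(P)$ therefore overshoots, though the intended identity $\mathbf{fix}~\mathcal{O}(P) = \mathbf{fix}~\mathcal{D}(P)$ still follows immediately from your argument.
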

\begin{proof}
By Proposition~\ref{denotation-gnf}, we consider the equivalent process $Q = \mathit{gnf}(\{\},P)$, and by Proposition~\ref{denotational-seq}, the denotation of the equivalent process $\mathtt{seq}~Q$.
The function $\mathcal{D}(\mathtt{seq}~Q)$ computes the functional composition of each guarded command in sequential order.
This corresponds to a particular fair scheduling strategy.
By~\citet{apt-essence-1999}, the fixed points of any fair execution of monotone functions coincide, and therefore it coincides with the operational execution $\mathbf{fix}~\mathcal{O}(P)$.
Therefore, $\mathbf{fix}~\mathcal{D}(P) = \mathbf{fix}~\mathcal{O}(P)$.
\end{proof}

\section{Load/Store Semantics}

We show our semantic correct w.r.t. a \textit{weak memory consistency model} (see, \emph{e.g.}~\cite{nagarajan-primer-consistency-2020}).
Consistency defines a correct behavior in terms of loads and stores in shared memory.
Load and store instructions are written $\mathbf{L}~a~r$ and $\mathbf{S}~r~a$ where $a$ is the location of a variable in shared memory and $r$ refers to a thread-local memory location such as a register.
For both instructions, the direction of the memory movement is from the first to the second argument.
To investigate the low-level details of execution, we compile a guarded command $\{b_1,\ldots, b_n\} \Rightarrow x \leftarrow f(y_1,...,y_m)$ to a sequence of loads and stores:
\begin{lstlisting}[language=caml,mathescape,morekeywords={L,S}]
while true then
    L $b_1~rb_1$; $\ldots$; L $b_n~rb_n$;
    if $rb_1 \land \ldots \land rb_n$ then
        L $y_1~ry_1$; $\ldots$; L $y_m~ry_m$;
        $\mathcal{C}(f(ry_1, \ldots, ry_m), \mathit{rf})$;
        L $x~rx$;
        $\mathcal{C}(rx \sqcup \mathit{rf}, ox)$;
        $\mathcal{C}(ox > rx, bx)$;
        if $bx$ then
            S $ox~x$;
\end{lstlisting}
The compilation function $\mathcal{C}(E,r)$ compiles the expression $E$, with its result stored into $r$.
As the corresponding expressions are only defined over thread-local variables, they do not pose any concurrent threats.
We note that $\sqcup$ and $>$ are compiled according to the lattice-type of the variable $x$.
Although the compilation of a guarded command is a sequential program, only a few of the program order dependencies are important.
\begin{lemma}
The only important program order ($\mathit{po}$) dependencies are $1 \xrightarrow{\text{po}} 2  \xrightarrow{\text{po}} 3 \xrightarrow{\text{po}} 9 \xrightarrow{\text{po}} 10$, $1 \xrightarrow{\text{po}} 6 \xrightarrow{\text{po}} 7$, and $1 \xrightarrow{\text{po}} 4 \xrightarrow{\text{po}} 5 \xrightarrow{\text{po}} 7 \xrightarrow{\text{po}} 8 \xrightarrow{\text{po}} 9$.
\label{po-guarded-command}
\end{lemma}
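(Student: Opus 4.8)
The plan is to first pin down what \emph{important} should mean here: a program-order edge $i \xrightarrow{\text{po}} j$ is important if dropping it --- letting the weak memory model reorder the events of lines $i$ and $j$, or run them concurrently --- can break the intended behaviour of the guarded command, namely make it store into $x$ a value above the fixed point $s^\ast$ reached by the fair iteration of $\hookrightarrow$, or store while its guard is $\mathit{false}$, or never observe an update that rule \textsc{select} would. Under this reading the proof splits into two directions: every edge appearing in (or transitively implied by) the three listed chains is important, and every remaining $po$ edge is not.

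For the first direction I would walk through the listed edges and, for each, point at the reordering that is observably wrong. The edges $2\xrightarrow{\text{po}}3$, $4\xrightarrow{\text{po}}5$, $5\xrightarrow{\text{po}}7$, $6\xrightarrow{\text{po}}7$, $7\xrightarrow{\text{po}}8$ and $8\xrightarrow{\text{po}}9$ are read-after-write dependencies on the thread-local registers $rb_i$, $ry_j$, $rf$, $rx$, $ox$ and $bx$; reversing any of them evaluates the guard of line~3, the join $rx\sqcup rf$, the test $ox>rx$, or the guard of line~9 from an uninitialised register, so they must be kept (and any sensible weak model keeps dependency order). The edges $3\xrightarrow{\text{po}}9$ and $9\xrightarrow{\text{po}}10$ are control dependencies from a branch to the store it guards: moving $\mathbf{S}~ox~x$ before either branch lets the code write to $x$ while its guard is $\mathit{false}$, which no sequence of \textsc{select} steps produces. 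Finally $1\xrightarrow{\text{po}}2$, $1\xrightarrow{\text{po}}4$ and $1\xrightarrow{\text{po}}6$ record that the three groups of loads must be re-executed on every turn of the \texttt{while} loop and sequenced after the loop head; hoisting them out of the loop would freeze stale values and defeat the eventless, AC-1-style convergence that $\mathcal{O}$ relies on. Taking the transitive closure then pins $bx$ after the whole value path and pins $\mathbf{S}~ox~x$ after lines $3$, $6$ and $7$ --- exactly the ordering an execution consistent with \textsc{select} needs.

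For the second direction the engine is monotonicity. The loads of lines $2$, $4$ and $6$ touch three groups of distinct shared locations, so the model may interleave and reorder them freely and may even float them above the branch of line~3 (a speculative load has no side effect), and lines $5$, $7$, $8$ are purely register-local. Whatever tuple such a relaxed schedule reads, each component is a value its variable held at some moment, hence is below the corresponding component of $s^\ast$ because every variable is non-decreasing along $\hookrightarrow$; by monotonicity of $f$ this gives $rf\leq s^\ast(x)$ and $rx\leq s^\ast(x)$, so $ox=rx\sqcup rf\leq s^\ast(x)$ and the store never overshoots. Coherence (SC-per-location, which the model guarantees unconditionally) already orders $\mathbf{S}~ox~x$ after the load $\mathbf{L}~x~rx$ and orders the writes to $x$ across iterations, so line~10 needs no additional $po$ edge, and the test $ox>rx$ at line~8 only ever elides a redundant store and so cannot affect soundness. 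I would close with a short enumeration verifying that every $po$ edge outside the transitive closure of the three chains --- the internal orders among the $\mathbf{L}~b_i$'s, among the $\mathbf{L}~y_j$'s, the edges from the $b$-loads to the $y$-loads or to $\mathbf{L}~x$, and the control edges from line~3 to lines $4$--$8$ --- falls under one of these cases.

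The step I expect to be the real obstacle is the second direction, and specifically justifying \emph{skewed} reads, where lines $2$, $4$ and $6$ observe values from different moments so that the resulting store is not literally one \textsc{select} step. The way out is the $\leq s^\ast$ bound above together with the loop: the skewed store is always below $s^\ast$, so the family of stores performed stays below $s^\ast$, and since every guarded command is re-executed and eventually performs a non-skewed read of each of its variables, the fair chaotic-iteration result of~\citet{apt-essence-1999} still drives the store to $s^\ast$. Stating this interaction between relaxed loads, the \texttt{while} loop and the fairness hypothesis cleanly --- rather than the routine register-dependency and control-dependency bookkeeping --- is where the care lies.
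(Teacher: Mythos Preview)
The paper states this lemma without proof; it appears as a bare assertion followed only by the remark that ``this lemma can be used by compilers to improve efficiency by reordering some of the instructions.'' Your proposal therefore supplies strictly more than the paper does, and there is no authorial argument to compare against.

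On its own merits your plan is sound. Pinning down ``important'' via the soundness/completeness bounds of the load/store semantics is the right move, and the two-direction split (data/control dependencies for necessity, monotonicity plus eventual coherence for sufficiency) matches how the paper later argues Theorems~\ref{soundness} and~\ref{completeness}. The skewed-reads discussion is the correct identification of where the work lies, and bounding every relaxed store by $s^\ast$ via monotonicity of $f$ and extensiveness of $\hookrightarrow$ is exactly the engine those theorems use.

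One tightening worth making: your justification of $3\xrightarrow{\text{po}}9$ says only that a store while the guard is $\mathit{false}$ ``is something no sequence of \textsc{select} steps produces.'' That is true but undersells the point. If the guard is $\mathit{false}$ at the fixed point $s^\ast$ (hence everywhere, since it lives in $\mathit{BInc}$), then \textsc{select} never applies this command and there is no reason for $f(s^\ast(y_1),\ldots,s^\ast(y_m))$ to lie below $s^\ast(x)$; dropping $3\xrightarrow{\text{po}}9$ can thus push $x$ strictly above $s^\ast(x)$, violating soundness outright rather than merely failing to mimic a \textsc{select} step. Making this explicit strengthens the necessity direction and ties it directly to the correctness criterion you set up at the start.
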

\noindent
This lemma can be used by compilers to improve efficiency by reordering some of the instructions.

We now define the load/store operational semantics.
The difference with the operational semantics of the previous section is that we have an assignment operator which does not perform a join in the store.
The assignment is written $s[a \mapsto b]$, meaning the value of $a$ is replaced by $s(b)$ in the store $s$.
We suppose the variable $r$ is local to a thread.
\begin{mathpar}
\inferrule[load]
{}
{\langle s, \mathbf{L}~a~r; P \rangle \twoheadrightarrow \langle s[r \mapsto a], P \rangle}

\inferrule[store]
{}
{\langle s, \mathbf{S}~r~a; P \rangle \twoheadrightarrow \langle s[a \mapsto r], P \rangle} \\

\inferrule[select2]
{I \in \mathit{GC} \\ \langle s, I \rangle \twoheadrightarrow \langle s', I' \rangle}
{\langle s, \mathit{GI} \rangle \Rightarrow \langle s', (\mathit{GI} \setminus \{I\}) \cup \{I'\} \rangle}
\end{mathpar}
We skip the operational semantics of \textbf{if} and \textbf{while} which is standard, and focus on the execution of a sequence of load and store instructions.
The rule \textsc{select2} is similar to \textsc{select} but executes instructions at a finer grain.
Each step of the transition $\Rightarrow$ models an atomic action on the shared memory.
The load/store operational semantics, and the proofs of correctness below, make the following assumptions on the memory consistency model and cache coherency protocol:
\begin{description}
\item[(\textbf{PO})] The program order of Lemma~\ref{po-guarded-command} must be enforced.
\item[(\textbf{ATOM})] Load and store instructions must be atomic.
\item[(\textbf{EC})] The caches must eventually become coherent.
\item[(\textbf{OTA})] Values cannot appear \textit{out-of-thin-air}.
\end{description}
\noindent
(\textbf{PO}) is merely a consequence of the Von Neumann model of computation.
(\textbf{ATOM}) is an important assumption to avoid torn reads and writes, \emph{e.g.}, a thread writes the two first bytes of a 32 bits integer and another thread writes the two last bytes.
Such a data race is undefined behavior in many languages, \emph{e.g.}, C++~\cite{ISO-IEC-14882:2020} (\S6.9.2.1) and CUDA~\cite{lustig-ptx-cuda-formal-2019}, which is why we must prevent it.
(\textbf{EC}) is relevant to cache coherency and supposes the threads eventually view the same value of any variable after a finite number of loads.
In other terms, a thread cannot work on its own view of the world forever, and communication must happen at one point in time.
(\textbf{OTA}) is a common assumption to avoid a thread to speculatively write a value which might be wrongly read by another thread~\cite{boehm-out-of-thin-air-2014}.
In addition, and although hardware vendors seldom specify the scheduling strategy employed, we make the additional fairness assumption (defined similarly to Def.~\ref{fairness-threads}):
\begin{description}
\item[(\textbf{FAIR})] The thread scheduling strategy must be fair.
\end{description}
We now prove soundness and completeness results.
Soundness implies the load/store semantics of a PCCP program does not generate more information than the operational semantics, and completeness that it does not generate less information.

\begin{theorem}[Soundness]
Let $\mathit{GI}$ be the load/store compilation of a PCCP program $P$.
Then for any sequence $\langle s_1, \mathit{GI} \rangle \Rightarrow \ldots \Rightarrow \langle s_i, \mathit{GI}' \rangle$, we have $s_i \leq (\mathbf{fix}~\mathcal{O}(P))(s_1)$.
\label{soundness}
\end{theorem}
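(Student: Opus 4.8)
The plan is to fix $s^* \eqdef (\mathbf{fix}~\mathcal{O}(P))(s_1)$ and prove, by induction on the length $i$ of the transition sequence, that $s_i \leq s^*$ holds on the shared (PCCP) variables. First I would extract two facts from the earlier results. Writing $G = \mathit{gnf}(\{\},P)$ and $Q$ for the PCCP program it denotes, Propositions~\ref{denotational-seq} and~\ref{denotation-gnf}, Theorem~\ref{closure-op}, and the theorem $\mathbf{fix}~\mathcal{D}(P) = \mathbf{fix}~\mathcal{O}(P)$ proved above give that $\mathbf{fix}~\mathcal{O}(P)$ is a closure operator; hence $s_1 \leq s^*$ by extensivity, and $s^*$ is a fixed point of $\mathcal{D}(P)$. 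Since $\mathcal{D}(Q) = \bigsqcup_{\gamma \in G}\mathcal{D}(\gamma)$ and every $\mathcal{D}(\gamma)$ is extensive, $s^*$ is a fixed point of each $\mathcal{D}(\gamma)$; unfolding this yields the \emph{fixed-point characterization} I will rely on: for every guarded command $\gamma = (\{b_1,\ldots,b_n\} \Rightarrow x \leftarrow f(y_1,\ldots,y_m)) \in G$, if $s^*(b_k) = \mathit{true}$ for all $k \leq n$, then $f(s^*(y_1),\ldots,s^*(y_m)) \leq s^*(x)$.

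Next I would fix the invariant maintained by the induction on a state $\langle s, \mathit{GI}'\rangle$: (a) $s(v) \leq s^*(v)$ for every shared variable $v$; (b) each load-target register $rb_k$, $ry_k$, $rx$ of each thread is $\leq$ the $s^*$-component of the variable it was loaded from (with initial register values taken as $\bot$); (c) the register $rf$ is $\leq f(s^*(y_1),\ldots,s^*(y_m))$ for its thread's command; and (d) the register $ox$ is $\leq s^*(x)$ for its thread's command. The base case is immediate from $s_1 \leq s^*$ and register initialisation. The inductive step cases on the executed instruction (via \textsc{select2}): control flow and the thread-local computations $\mathcal{C}(\cdot)$ leave the shared store untouched, while (b)--(d) are preserved using monotonicity of $f$ and of $\sqcup$ and the hypothesis -- for (d) one uses that reaching line~7 forces every $rb_k$ to be $\mathit{true}$, hence $s^*(b_k) = \mathit{true}$ by (b), hence $f(s^*(y_1),\ldots,s^*(y_m)) \leq s^*(x)$ by the characterization, hence the freshly computed $ox = rx \sqcup rf \leq s^*(x) \sqcup s^*(x) = s^*(x)$; a \textsc{load} $\mathbf{L}~a~r$ sets $s(r) := s(a) \leq s^*(a)$, preserving (b); and the only instruction touching a shared variable, the \textsc{store} $\mathbf{S}~ox~x$ at line~10, gives $s_{i+1}(x) = s_i(ox) \leq s^*(x)$ by (d), preserving (a). Projecting the invariant onto the shared variables gives $s_i \leq s^*$, which is the claim.

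The main difficulty I anticipate is not a calculation but the bookkeeping forced by the weak-memory setting, and it shows up in two places. First, under arbitrary interleaving the shared store is \emph{not} monotone along the run: a \textsc{store} $\mathbf{S}~ox~x$ may overwrite a larger value of $x$ written meanwhile by another thread (a lost update), so the invariant has to be an upper bound by $s^*$ rather than a monotonicity statement -- recovering the lost information is precisely what the companion completeness result below does, via (\textbf{EC}) and (\textbf{FAIR}). Second, to justify that the stored value $ox$ really is a join of a past value of $x$ with $f$ applied to past values of the $y_k$'s, all under true guards, one leans on the program-order chains of Lemma~\ref{po-guarded-command} ($1 \xrightarrow{\text{po}} 6 \xrightarrow{\text{po}} 7$, $1 \xrightarrow{\text{po}} 4 \xrightarrow{\text{po}} 5 \xrightarrow{\text{po}} 7 \xrightarrow{\text{po}} 8 \xrightarrow{\text{po}} 9$, and $1 \xrightarrow{\text{po}} 2 \xrightarrow{\text{po}} 3 \xrightarrow{\text{po}} 9 \xrightarrow{\text{po}} 10$), which are exactly what make clauses (b)--(d) stable, together with (\textbf{ATOM}) (each register receives a genuine, untorn value) and (\textbf{OTA}) (no speculative value appears). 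With those assumptions in place the case analysis is routine.
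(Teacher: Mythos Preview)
Your proposal is correct and follows the same core strategy as the paper: induction on the length of the $\Rightarrow$-sequence maintaining the invariant that every component of the state is bounded above by $s^* = (\mathbf{fix}~\mathcal{O}(P))(s_1)$, relying on (\textbf{ATOM}) so that loads and stores transfer genuine values. Your plan is in fact considerably more explicit than the paper's brief sketch---you isolate the fixed-point characterization $f(s^*(y_1),\ldots,s^*(y_m)) \leq s^*(x)$, spell out the per-register invariant clauses (b)--(d), and correctly flag the non-monotonicity of the shared store under lost updates, a subtlety the paper's proof leaves implicit.
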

\begin{proof}
By (\textbf{ATOM}), we must have $s_k(a) = s_k(r)$ after the execution of any instruction $\mathbf{S}~a~r$ or $\mathbf{L}~r~a$.
Moreover, the register $r$ can only evolve monotonically as it is local to a thread and is only modified by monotone operations.
Therefore, we have at each step $k$, $s_k(r) = s_k(a) \leq (\mathbf{fix}~\mathcal{O}(P))(s_i)(a)$ for all variables $a$.
By induction on $\Rightarrow$, it is necessary that $s_i \leq (\mathbf{fix}~\mathcal{O}(P))(s_1)$.
\end{proof}
Because of the assignment operator, the transition $\Rightarrow$ is neither extensive nor monotone.
However, we can still show that $\Rightarrow$ converges to the fixed point of $\mathcal{O}(P)$.
\begin{theorem}[Completeness]
Suppose the fixed point $\mathit{os} = (\mathbf{fix}~\mathcal{O}(P))(s_1)$ is reachable in a finite number of steps.
Then, $\exists{i \in \mathbb{N}},~\langle s_1, \mathit{GI} \rangle \Rightarrow \ldots \Rightarrow \langle s_i, \mathit{GI}' \rangle \Rightarrow \ldots$ such that $s_i \geq \mathit{os}$.
Further, for all $j \in \mathbb{N}$, $j > i$, $s_i = s_j$.
\label{completeness}
\end{theorem}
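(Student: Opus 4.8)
The plan is to sandwich every reachable store between $s_1$ and $os$, to observe that $os$ is a state the load/store dynamics can no longer leave, and to show that a fair execution actually climbs to $os$; the climb is where the real work is. First I would record two invariants of any fair $\Rightarrow$-execution starting from $s_1$. By Theorem~\ref{soundness}, $s_k \leq os$ for every $k$. A dual, elementary induction on $\Rightarrow$ gives $s_k \geq s_1$ for every $k$: the only instruction writing shared memory, $\mathbf{S}~ox~x$, writes $ox = rx \sqcup rf$, where $rx$ holds a value previously loaded from $x$ that is already $\geq s_1(x)$ by induction. Hence every reachable store lies in $[s_1, os]$; since $\mathcal{O}(P)$ is monotone and $os$ is one of its fixed points above $s_1$, $os$ remains the least $\mathcal{O}(P)$-fixed point above every reachable store, and is still reachable from it in finitely many $\hookrightarrow$-steps.

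Next I would isolate the computational fact that $os$ is \emph{passive}: if the shared store equals $os$, a complete loop iteration of a guarded command $\{b_1,\ldots,b_n\} \Rightarrow x \leftarrow f(y_1,\ldots,y_m)$ all of whose loads read from $os$ performs no store --- either some guard is false and the inner conditional is skipped, or all guards hold and, because $os$ is an $\mathcal{O}(P)$-fixed point, $rf = f(os(y_1),\ldots,os(y_m)) \leq os(x)$, so $ox = rx \sqcup rf = os(x) = rx$ and $bx = \mathit{false}$. Combined with the invariant $s_k \leq os$, this means that once the store equals $os$, every freshly started iteration and every in-flight iteration whose loads are up to date leave the memory untouched; the only way the store can still drop below $os$ is through a residual in-flight iteration that loaded a sub-$os$ value earlier, and even such a store produces a value $\leq os(x)$ (it is $rx \sqcup rf$ with $rx \leq os(x)$ by soundness and $rf \leq os(x)$ by monotonicity of $f$ and fixedness of $os$).

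The core step is to show the store reaches $os$. Fix a shortest operational witness $s_1 = t_0 \hookrightarrow t_1 \hookrightarrow \cdots \hookrightarrow t_N = os$, where $t_{l+1}$ is obtained from $t_l$ by executing one guarded command $G_l$; such a finite witness is exactly the hypothesis of the theorem. The plan is a well-founded induction bounded by $N$. By (\textbf{FAIR}) the thread carrying $G_l$ runs complete iterations infinitely often, and by (\textbf{EC}) and (\textbf{OTA}) its loads eventually reflect a store that is $\geq$ the store at iteration start and contains no spurious values, so an iteration reading a store already dominating $t_l$ stores a value bringing the memory up to a store dominating $t_{l+1}$. The main obstacle is that this domination is not stable: a concurrent iteration holding a stale, sub-$os$ copy of some variable may overwrite it with an incomparable value and drop the store below an already-reached $t_l$. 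To absorb this I would use that the lost information is always $\leq os$ and, by (\textbf{FAIR}), is re-derived infinitely often, together with the finiteness of $N$: this makes ``the least number of $\hookrightarrow$-steps still needed to pass from the strongest store seen so far to $os$'' a well-founded measure that, over enough fair rounds, is driven to $0$. This is the chaotic-iteration phenomenon of \citet{cousot-chaotic-iterations-1977} and \citet{apt-essence-1999} transported to the asynchronous setting, the downward overwrites being harmless because they never leave $[s_1, os]$.

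Finally I would assemble the pieces. The core step yields an index at which the store is $\geq os$, hence, by Theorem~\ref{soundness}, equal to $os$. After that index, by the passivity fact only residual stale iterations can still write, and a short counting argument over the finitely many threads --- each of which, after finishing its current iteration, reloads the current store by (\textbf{EC}) and then stays passive --- shows these are exhausted after finitely many steps, any intervening dip below $os$ being undone by the core step's re-derivation. Combining, there is a single $i$ with $s_i = os$ and $s_j = s_i$ for all $j > i$, which is the claim. I expect the well-founded-measure argument of the core step --- reconciling the load/store semantics' downward overwrites with monotone convergence --- to be the only genuinely delicate part; the rest is bookkeeping over the two invariants.
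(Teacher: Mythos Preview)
Your plan follows the paper's own argument closely: bound the reachable stores in $[s_1, os]$ (the paper uses soundness for the upper bound and, implicitly, the same $s_k \geq s_1$ invariant you spell out), then simulate the finite operational chain $t_0 \hookrightarrow \cdots \hookrightarrow t_N = os$ step by step under (\textbf{FAIR})/(\textbf{EC})/(\textbf{PO}), and conclude stability from soundness. Where you go further is in explicitly confronting the non-extensiveness of $\Rightarrow$---stale in-flight iterations overwriting a variable with an incomparable smaller value---which the paper acknowledges just before the theorem but does not handle inside the proof; the paper simply writes ``by induction on the finite sequence of operational transitions'' without explaining why a backward store cannot break the inductive invariant $s_k \geq t_l$. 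Your cumulative-supremum device $T_m = \bigvee_{k \leq m} s_k$ is a sensible strengthening (monotone in $m$, and the $\hookrightarrow$-distance from $T_m$ to $os$ is well-defined and non-increasing by monotonicity), but be aware that your sketch of why this measure \emph{strictly} decreases is not yet complete: the thread executing $G_l$ loads from the current store $s_m$, not from $T_m$, so ``an iteration reading a store already dominating $t_l$'' is not guaranteed by $T_m \geq t_l$ alone. This is exactly the asynchronous-iteration subtlety the paper defers to \citet{cousot-asynchronous-1977} immediately after the proof, so neither the paper nor your sketch discharges it in-line; you are right to flag it as the delicate part. Your stability argument (passivity at $os$ plus exhaustion of finitely many residual stale iterations) is also more complete than the paper's one-line ``consequence of Theorem~\ref{soundness}'', which by itself only yields $s_j \leq os$, not $s_j = os$.
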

\begin{proof}
We first show that $\Rightarrow$ eventually makes \textit{progress}.
Because of the fairness condition (\textbf{FAIR}), we must reach a state $\langle s_k, \mathit{GI} \rangle$ such that all instructions have been executed at least once.
If no store operation was executed, then $s_k = \mathit{os}$ because all tell operations are at a fixed point.
Otherwise, at least one store operation on a variable $x$ has been executed, and due to the condition on lines 8--9 ($ox > rx$) and (\textbf{ATOM}), we necessarily have $s_k(x) > s_1(x)$.
This guarantees the progress of the transition $\Rightarrow$ after $k$ steps.

Now, if $s_k < \mathit{os}$, it means that $\mathcal{O}(P)$ is not at a fixed point on $s_k$, and therefore there is a transition $\langle s_k, G \rangle \hookrightarrow \langle s_{k+1}, G \rangle$ such that $s_k < s_{k+1}$.
Suppose the transition executes the process $\{b_1,\ldots,b_n\} \Rightarrow x \leftarrow f(y_1,\ldots,y_m)$.
By (\textbf{EC}), we must reach a state in which the latest values of $b_1,\ldots,b_n$ are visible to the thread, and the condition on line 3 is true.
Similarly for the values $y_1,\ldots,y_m$ in which case, by (\textbf{PO}), $f(ry_1,\ldots,ry_m)$ (line 5) and $rx \sqcup rf$ (line 7) must lead to the same result than the one given in the operational semantics.
At that point, $ox > rx$ must hold by hypothesis and a store operation must be issued.
Therefore, by induction on the finite sequence of operational transitions, we must have $s_i \geq \mathit{os}$.

The fact that $s_i = s_j$ for all $j > i$ is a consequence of Theorem~\ref{soundness}.
\end{proof}
Theorems~\ref{soundness} and~\ref{completeness} show that the result of the low-level parallel execution in terms of loads and stores coincides with the fixed point of the operational semantics, when it is reachable in a finite number of steps (which is the only behavior we are interested by).
The \textit{asynchronous iterations} of \citet{cousot-asynchronous-1977} generalizes this result, although in a more abstract mathematical setting, to the case of infinite sequences.

We address more practical aspects of the computation such as fixed point and failure detections in the next section.

\section{Turbo: GPU Constraint Solver}
\label{turbo}

\textsc{Turbo} is a constraint solver implementing \textit{entirely on GPU} a standard propagate and search algorithm~\cite{propagation-guido-tack}.
Some of the design choices of \textsc{Turbo} are direct consequences of the \textsc{Turing} and newer architectures of \textsc{nvidia} GPUs which we review now.
To illustrate our explanations, we take the example of the laptop GPU \textsc{nvidia Quadro RTX 5000 Max-Q}, which we later use to perform our experiments.
It has $48$ \textit{streaming multiprocessors} (SMs), each with $64$KB of L1 cache and $64$ cores.
There is a total of $48 * 64 = 3072$ cores.
When programming in CUDA, we group parallel processes in \textit{blocks} that are executed on a single SM, and have their own intra-block synchronization primitives.
In particular, we will use the barrier instruction \texttt{__syncthread()} which forces all threads of a block to wait for each other before continuing.
Threads in a block can communicate through up to $48$KB of \textit{shared memory}, a very fast on-chip memory local to a block.
Finally, we have $16$GB of \textit{global memory} shared among all SMs, which can be used for inter-blocks communication.

In order to increase the solving efficiency, the design of \textsc{Turbo} attempts to match the architecture of GPUs.
Firstly, we completely avoid communication latency with the CPU as the full solving algorithm is on the GPU.
Secondly, we dynamically generate subproblems following a variant of embarrassingly parallel search (EPS)~\cite{malapert-embarrassingly-2016}.
Each subproblem is then solved by a block on a single SM, which limits data movements between caches and global memory.
To maximize the utilization of the cache, avoid costly memory allocation and synchronization among threads, we have minimal data structures.
Propagation is \textit{eventless}~\cite{mackworth-consistency-1977}, hence we do not need to maintain a queue of propagators to be executed~\cite{schulte-efficient-2008} or take care of synchronization and workload balancing issues~\cite{parallel-consistency-cp}.
State restoration during backtracking is done by full recomputation~\cite{trail-vs-copy}.
Each block maintains two stores of variables: one for the root node of the subproblem and one for the computation.
On backtracking, the root node of the subproblem is copied in the current store, and the branching decisions of the new path to explore are committed into the store.
Therefore, we avoid maintaining a trailing data structure which would induce complicated thread synchronization.
Adaptative recomputation~\cite{trail-vs-copy} incurs copies of the stores which are not cache friendly, hence we chose to avoid extra copies completely.
This design is simple, yet efficient as shown below.

\citet{lustig-ptx-cuda-formal-2019} formalizes the memory consistency model of PTX, the instruction set architecture of \textsc{nvidia} GPUs.
The assumptions made by the load/store semantics are correct w.r.t. the PTX memory consistency model.
Indeed, \citet{lustig-ptx-cuda-formal-2019} introduce 6 axioms which match or strengthen our assumptions.
In particular, the \textit{SC-per-location axiom} implies (\textbf{PO}) as it states ``morally strong [...] communication order cannot contradict program order''.
The \textit{no-thin-air axiom} matches the (\textbf{OTA}) assumption, as both target the same issue.
Eventual cache coherence (\textbf{EC}) is enforced at the level of blocks because threads in a block share the same cache, therefore observe any store operation directly.
The remaining assumption (\textbf{ATOM}) is slightly more tricky.
When considering the CUDA programming model, store and load are not necessarily atomic.
Fortunately, \textsc{nvidia} recently implemented the C++11 atomic library for CUDA\footnote{\url{https://nvidia.github.io/libcudacxx/}}, and therefore we simply need to wrap shared integer and Boolean variables in a templated class \texttt{atomic<T>}.
\st{Another way to fulfil the (\textbf{ATOM}) assumption is to store the shared variables in the \textit{shared memory}.
Load and store accesses to shared memory are sequentialized\footnote{\url{https://docs.nvidia.com/cuda/cuda-c-best-practices-guide/index.html#shared-memory}}, and therefore satisfy the \textbf{(ATOM)} assumption.
However, this method is less generic than atomics and is proper to the hardware architecture of recent GPUs.
Nevertheless, in our experiments, we rely on the shared memory as we measured it to be the fastest solution.}
{\color{red}Erratum:} sequentialization does not imply atomicity, so one should solely relies on atomics, using relaxed load and store operations.
For the experiments, we measured a small 5\% decrease in efficiency when using atomics, so it does not change the conclusion of this paper.

\paragraph{Fixed point loop}

We now present the propagation loop with fixed point and failure detections, improving on the \texttt{while true} loop of the guarded commands compilation.
We define the propagation loop as a function \texttt{void propagation(int tid, int stride, VStore\& s, Vector<Propagator*>\& props)} where \texttt{tid} is a unique thread identifier, \texttt{stride} is the number of threads in the block, \texttt{s} is a store of interval variables and \texttt{props} is the array of propagators (which are PCCP processes):
\begin{lstlisting}[language=c++]
__shared__ bool has_changed[3] = {true};
for(int i=1; has_changed[(i-1)%3]; ++i){
  for (int t = tid; t < props.size(); t += stride){
    if(props[t]->propagate(s))
      has_changed[i%3] = true; }
  has_changed[(i+1)%3] = false;
  __syncthreads();
}
\end{lstlisting}
Fixed point detection is done using three Boolean variables \texttt{has_changed[3]}.
In each iteration, we manipulate the past, present and future values of \texttt{has_changed}.
First, we continue to iterate if a variable changed in the previous iteration (\texttt{has_changed[(i-1)\%3]}).
Second, we set the current value to true if a propagator successfully changes a domain (\texttt{has_changed[i\%3]}).
Note that for each guarded command, the Boolean value $bx$ indicates if a change occurred.
Thirdly, we must initialize to false the value of the next iteration.
At the end of each iteration, all threads reach a barrier, thus the current and next value of \texttt{has_changed} is correctly set.
As for failure detection, a failure is a fixpoint on $\top$, \emph{i.e.}, an empty interval in the store.
This case can be treated after the loop by examining \texttt{s}.

This propagation loop algorithm is reminiscent of the AC-1 algorithm~\cite{mackworth-consistency-1977} which iterates until arc consistency is achieved on all constraints.
The similarity only concerns the propagation loop as we do not control the level of consistency achieved by the propagators here.

\paragraph{Evaluation}

\begin{table}
\begin{tabular}{l c c c c}
solver & feas. & opt. & nodes-per-sec. & time (sec.) \\
\hline
\textbf{Patterson} & 110 & 110 & - & - \\
Turbo & 110 & 108 & 473k & 1362s \\
GeCode & 109 & 103 & 49k & 3812s \\
\hline
\textbf{j30} & 480 & 480 & - & - \\
Turbo & 379 & 309 & 423k & 5349s \\
GeCode & 376 & 307 & 63k & 5361s
\end{tabular}
\caption{Comparing Turbo to GeCode on the RCPSP scheduling problem.}
\label{result-table}
\end{table}

We evaluate \textsc{Turbo}\footnote{\textbf{Replicate}: the version of \textsc{Turbo} used in this paper is available at \url{https://github.com/ptal/turbo/tree/aaai2022}.} on the RCPSP problem introduced above.
As it is common knowledge in CUDA, it is best to over-saturate the SMs with blocks, and over-saturate the blocks with threads.
We have identified that multiplying by $4$ the SMs---to obtain $192$ blocks---and cores on SMs---to obtain $256$ threads---yield the best efficiency on this particular GPU model.
We compare \textsc{Turbo} with the well-known constraint solver \textsc{GeCode 6.2.0}~\cite{gecode} in parallel mode on a processor i7-10750@2.60GHz with 6 cores and 12 threads.
The constraint model and search strategy are the same for both solvers.
We experiment on 2 data sets: Patterson~\cite{patterson-comparison-1984} (timeout of 5 minutes) which has instances with various numbers of tasks and resources, and j30 from PSPSLIB~\cite{kolisch-psplib-project-1997} (timeout of 30 seconds) with 30 tasks and 4 resources.
The results are presented in Table~\ref{result-table}.
\textsc{Turbo} positively compares to GeCode, being slightly better on both data sets.
An important aspect is that \textsc{Turbo} processes up to an order of magnitude more nodes per seconds than GeCode.
Interestingly, GeCode failed on three instances due to an assertion relevant to the parallel code.
This shows the difficulty to code correctly multithreaded program, hence one advantage of the formally correct and simpler design of \textsc{Turbo}.
Our goal with these experiments was only to show a GPU-based solver can be competitive, but there are methods and solvers more efficient on RCPSP as shown by~\citet{schutt-rcpsp-2013}.

\section{Related Work and Discussion}

We are not aware of similar work connecting concurrent constraint programming and the low-level aspects of parallel programming.
Therefore, we primarily compare parallel constraint solvers to \textsc{Turbo}.

The literature on constraint solving on GPUs is scarce.
Actually, as far as we know, \textsc{nvidioso}~\cite{campeotto-exploring-2014} is the only propagate and search constraint solver running on GPU using CUDA.
A difference with \textsc{Turbo} is that \textsc{nvidioso} relies on the CPU for the backtracking and propagation of some constraints, whereas one guiding principle behind \textsc{Turbo} is to only rely on the GPU.
Moreover, in contrast to our work, the search component is not parallelized.
\textsc{phact} is another parallel constraint solver running on heterogeneous architectures (CPU, GPU and others) using \textsc{OpenCL}~\cite{roque-parallel-cp-phact-2018}.
However, it is not clear from the paper if and how propagation is parallelized.

A larger number of works is available when considering other solving algorithms or CPU parallelism~\cite{hamadi-handbook-2018}; we only overview a few selected ones.
Local search algorithms have been successfully implemented on GPUs, showcasing speed-up of one order of magnitude in comparison to sequential versions~\cite{arbelaez-gpu-2014,campeotto-gpu-2014}.
However, local search is not an exact method, which means there is no guarantee to find the best solution.
\citet{fioretto-accelerating-2018} successfully apply a \textit{dynamic programming} solving algorithm to GPUs.
Their results are very encouraging and lead to speed-up of two orders of magnitude.
However, as the authors warn us, this dynamic programming approach can require exponential time and space, and is only suited for specific problems.
In the case of CPU parallelism, the traditional approach is to parallelize the search component, see for instance~\citet{perron-search-1999} and~\citet{schulte-simple-pcp}.
\textsc{Turbo} relies on EPS which is a recent and robust search parallelization method~\cite{malapert-embarrassingly-2016}.
The class of \textit{distributed constraint satisfaction problems} is mostly relevant in a distributed computing context where agents communicate by message-passing~\cite{yokoo-algorithms-2000}, whereas we situate ourselves in a shared-state memory model.
We refer to the survey of \citet{gent-review-2018} for additional references.

It is worth noting that parallelizing propagation and search dates back to the eighties, with parallel logic programming languages.
Conjunctive and disjunctive logic predicates can be executed in parallel, similarly to the propagate and search components.
\citet{gupta-parallel-2001} survey the parallelization of logic programming languages.

Abstract interpretation~\cite{cousot-abstract-1977}, and its recent application to constraint reasoning~\cite{dsilva-abstract-2014,cousot-smt-product,pelleau-constraint-2013}, is deeply connected to PCCP, as both are based on lattice theory.
We have purposely avoided to introduce PCCP in the framework of abstract interpretation to keep the paper brief.
This framework would allow us to prove more properties, in particular when considering lattices that cannot be exactly represented by a machine; for instance the lattice of intervals of real numbers is approximated by the lattice of intervals of floating-point numbers.
Moreover, in the context of abstract interpretation, \citet{kim-deterministic-2020} introduce a method to compute deterministically a fixed point in parallel.
In our terms, they build a dependency graph among propagators and statically generate a scheduling of the propagators.
It improves our naive propagation loop where propagators are all executed, even those that are already at fixed points.
An adaptation of this work to constraint reasoning and GPUs is an interesting lead to improve the efficiency of the propagation loop.

\section{Conclusion}

We have laid the mathematical and practical foundation of PCCP, a new parallel programming language, useful for correctly computing fixed points in parallel.
We applied this paradigm to constraint solving with \textsc{Turbo} a parallel GPU-based constraint solver.
We experimentally validated \textsc{Turbo} on a scheduling problem, where the propagation and search are parallelized, and showed it competes positively with a CPU-based constraint solver.
This work is only a first step towards a full-fledged GPU-based constraint solver.
In particular, constraint learning techniques such as lazy clause generation~\cite{Ohrimenko:2009:PVL:1553323.1553342} pose additional challenges for their executions on GPUs, as already investigated in SAT solvers~\cite{hamadi-handbook-2018}.
Moreover, while some global constraints such as \texttt{table} natively support parallelization~\cite{campeotto-exploring-2014}, the applicability of PCCP for programming other global constraints remains to be shown.
We intend to explore a parallelization of the \texttt{range} and \texttt{roots} constraints which can construct many other global constraints~\cite{bessiere-range-roots-2009}.

\section*{Acknowledgments}

We are grateful to the reviewers for the useful comments, and the reference to the AC-1 algorithm.
This work is supported by the Luxembourg National Research Fund (FNR)---COMOC Project, ref. C21/IS/16101289.


\end{document}